\newcommand{\re}{\text{\rm Re\,}}
\newcommand{\bd}{{\mathbb{D}}}
\newcommand{\bc}{{\mathbb{C}}}
\newcommand{\bt}{{\mathbb{T}}}
\newcommand{\ca}{{\mathcal{A}}}
\newcommand{\css}{{\mathcal{S}}}
\renewcommand{\a}{\alpha}
\renewcommand{\l}{\lambda}
\newcommand{\e}{\varepsilon}
\newcommand{\s}{\sigma}
\renewcommand{\sp}{\sigma_p(J)}
\newcommand{\p}{\varphi}
\renewcommand{\th}{\theta}
\renewcommand{\d}{\delta}
\newcommand{\dd}{\Delta}
\renewcommand{\o}{\omega}
\newcommand{\oo}{\Omega}
\newcommand{\g}{\gamma}
\renewcommand{\gg}{\Gamma}
\newcommand{\ep}{\varepsilon}
\newcommand{\z}{\zeta}
\newcommand{\nt}{\noindent}
\newcommand{\bsl}{\backslash}
\newcommand{\pt}{\partial}
\newcommand{\lp}{\left(}
\newcommand{\rp}{\right)}
\newcommand{\wt}{\widetilde}
\newcommand{\dsp}{\displaystyle}
\DeclareMathOperator{\dist}{\rm dist}
\numberwithin{equation}{section}
\newtheorem{theorem}{Theorem}[section]
\newtheorem{lemma}[theorem]{Lemma}
\theoremstyle{definition}
\begin{document}

\title[A Blaschke-type condition and its application]
{A Blaschke-type condition and its application to complex Jacobi matrices}
\author[A. Borichev, L. Golinskii, S. Kupin]{A. Borichev, L. Golinskii, S.
Kupin}

\address{Universit\'e Aix-Marseille, 39, rue Joliot-Curie, 13453
Marseille \\Cedex 13, France}
\email{borichev@cmi.univ-mrs.fr, kupin@cmi.univ-mrs.fr}

\address{Mathematics Division, Institute for Low Temperature Physics and
Engineering, 47 Lenin ave., Kharkov 61103, Ukraine}
\email{leonid.golinskii@gmail.com}

\date{December 1, 2007}

\keywords{Blaschke-type estimates, Lieb-Thirring inequalities for Jacobi
matrices}
\subjclass{Primary: 30C15; Secondary: 47B36}

\begin{abstract}
We obtain a Blaschke-type necessary condition on zeros of analytic functions on
the unit disk with different types of exponential growth
at the boundary.  These conditions are used to prove  Lieb-Thirring-type
inequalities for the eigenvalues of complex Jacobi matrices.
\end{abstract}

\maketitle

\vspace{-0.5cm}
\section*{Introduction}
\label{s0}

In the first part of the paper, we obtain some information on the distribution
of the zeros of analytic functions from special growth classes. We use this
information to get interesting counterparts of famous Lieb-Thirring inequalities
\cite{li1,li2} for complex Jacobi matrices.

The traditional approach to Lieb-Thirring bounds for complex Jacobi matrices
consists in deducing them from the bounds for corresponding self-adjoint
objects, see Frank-Laptev-Lieb-Seiringer \cite{la} and Golinskii-Kupin
\cite{gk1}. However, this method is quite limited. At best, it allows us to get
an information on a part of the point spectrum $\s_p(J)$ of a Jacobi matrix $J$,
situated in a very special diamond-shaped region, see \cite[Theorem 1.5]{gk1}.
The information on the whole $\s_p(J)$ is missing.

The main idea of our paper is to use functional-theoretic tools
in the problem described above. Let  $J=J(\{a_k\},\{b_k\},\{c_k\})$
be a complex Jacobi matrix \eqref{e63} such that $J-J_0$ lies in
the Schatten-von Neumann class $\css_p$, $p\ge1$, and
$J_0=J(\{1\}, \{0\}, \{1\})$ (see Section \ref{s3} for terminology).
For an integer $p$, the regularized perturbation determinant
$u_p(\lambda)=\det_p(J-\lambda)(J_0-\lambda)^{-1}$ is well-defined, and is
an analytic function on $\hat\bc\bsl \s(J_0)=\hat\bc\bsl [-2,2]$.
Its zero set coincides with $\s_p(J)$ up to multiplicities.
We can obtain some information on the distribution of
the zeros of the function $u_p$ from its growth estimates in a
neighborhood of the boundary of  $\hat\bc\bsl [-2,2]$. As usual,
the domain $\hat\bc\bsl [-2,2]$ is mapped conformally to the unit disk $\bd$, so
we are mainly interested in properties of corresponding analytic functions on
$\bd$.

Let $\ca(\bd)$ be the set of analytic functions on the unit disk
$\bd$, $f\in\ca(\bd)$, $f~\not=0$, and $Z_f=\{z_j\}$ denote the set
of zeros of $f$.

\begin{theorem}\label{t1} Given a {\it finite} set $E=\{\z_j\}_{j=1,\dots,N}$, $E\subset\bt$,
let  $f\in\ca(\bd)$,  $|f(0)|~=1$,  and

\begin{equation}
|f(z)|\le \exp\Bigl(\frac{D}{\dist(z,E)^q}\Bigr),\label{1}
\end{equation}
with $q\ge 0$.
Then for any $\ep>0$,
\begin{equation}\label{e22}
\sum_{z\in Z_f}  (1-|z|)\dist(z,E)^{(q-1+\varepsilon)_+}\le
C(\ep,q,E)\, D.
\end{equation}
\end{theorem}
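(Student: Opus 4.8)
The plan is to use a Jensen-type / Nevanlinna argument adapted to the presence of finitely many boundary singularities. The classical fact is that if $|f(0)|=1$ and $\log|f|\le V$ for some function $V$, then $\sum_{z\in Z_f}(1-|z|)\lesssim \int_{\bt} V\,dm$ whenever $V\ge 0$; more precisely one has the Jensen/Poisson estimate
\begin{equation}\label{e-jensen}
\sum_{z\in Z_f}\log\frac1{|z|}\le \frac1{2\pi}\int_0^{2\pi}\log|f(re^{i\th})|\,d\th
\end{equation}
for $r$ close to $1$, and then one lets $r\to1$. The obstacle here is that the right-hand bound $\exp(D/\dist(z,E)^q)$ blows up near the points $\z_j$, so the boundary integral in \eqref{e-jensen} may diverge; the whole point is to trade a bit of this blow-up (the $\ep$ in the exponent) for convergence, and to keep track of the weight $\dist(z,E)^{(q-1+\ep)_+}$ on the zero side.

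\textbf{Step 1: reduction to one singular point.} Since $E$ is finite, I would first localize: cover $\bt$ by finitely many arcs, one around each $\z_j$ (on which $\dist(z,E)=|z-\z_j|$) plus arcs bounded away from $E$. On the arcs away from $E$ the hypothesis \eqref{1} gives a uniform bound $|f|\le e^{C D}$, which contributes only $O(D)$ to the zero sum by the ordinary Blaschke condition. So it suffices to estimate, for each $j$, the contribution of zeros $z$ lying in a fixed small sector near $\z_j$, and there we may replace $\dist(z,E)$ by $|z-\z_j|$. After a rotation assume $\z_j=1$.

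\textbf{Step 2: construction of an auxiliary majorant and a "killing" factor.} The heart of the matter is to handle the estimate $|f(z)|\le\exp(D|z-1|^{-q})$ near $z=1$. I would look for a function $g$ analytic and zero-free on $\bd$ with $|g(z)|\asymp\exp(-c D|z-1|^{-q})$ near $1$ and $|g|$ comparable to a constant elsewhere, so that $h=fg$ is bounded near $1$; but such a $g$ with a pure pole-type singularity in the exponent does not exist as a bounded analytic function unless we also pay on the rest of the circle. The cleaner route, and the one I expect the paper to take, is to use a conformal map sending the sector near $1$ to a half-plane or strip, transforming the singularity $|z-1|^{-q}$ into polynomial growth at $\infty$, and then apply a Phragmén–Lindelöf / Carleman-type estimate for the zero-counting function of functions of finite order in a half-plane. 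Concretely: map $\bd$ (or the relevant sector) to the right half-plane $\{\re w>0\}$ by a Möbius-type map $w=\psi(z)$ with $\psi(1)=\infty$, so that $1-|z|\asymp \re w/|w|^2$ and $|z-1|\asymp 1/|w|$ for $w$ large. Then \eqref{1} becomes $\log|F(w)|\le C D|w|^{q}$ for $F=f\circ\psi^{-1}$ in a neighborhood of $\infty$ in the half-plane, i.e. $F$ has order at most $q$ there.

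\textbf{Step 3: a Blaschke-type bound in the half-plane.} For a function $F$ analytic in the right half-plane with $\log|F(w)|\le C D|w|^q$ (for $|w|$ large, and bounded for $|w|$ small), one has the Carleman / Govorov-type estimate: the zeros $w_k=x_k+iy_k$ satisfy
\begin{equation}\label{e-hp}
\sum_k \frac{x_k}{1+|w_k|^{q+1+\ep}}\le C(\ep,q)\,D .
\end{equation}
This is the genuinely technical step — it requires either Carleman's formula for a half-disk of radius $R$ (relating $\sum_{|w_k|\le R} x_k(\tfrac1{|w_k|^2}-\tfrac1{R^2})$ to boundary integrals of $\log|F|$ over the semicircle and the segment), summed dyadically in $R$ with the growth bound $CDR^q$ on each shell, and then Abel summation to insert the convergence factor $|w_k|^{-1-\ep}$; or a direct subharmonicity argument applied to $\log|F(w)| - (\text{harmonic majorant built from } CD\,\re(w^{q+\ep}))$. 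I expect the main obstacle to be precisely this: making the dyadic bookkeeping clean and uniform in the number of singular points, and correctly handling the non-integer exponent $q$ via the $(\cdot)_+$ and the $\ep$-regularization (when $q<1$ the weight exponent is $0$, i.e.\ $(q-1+\ep)_+=0$ for small $\ep$, and one only needs the bare Blaschke sum $\sum(1-|z|)\le CD$, which is easier; the real content is $q\ge1$).

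\textbf{Step 4: transfer back.} Translate \eqref{e-hp} back through $\psi$: since $x_k=\re w_k\asymp (1-|z|)/|z-1|^2$ and $|w_k|\asymp |z-1|^{-1}$ for $z$ near $1$, the summand $x_k|w_k|^{-1-q-\ep}$ becomes $\asymp (1-|z|)\,|z-1|^{-2}\,|z-1|^{1+q+\ep} = (1-|z|)\,|z-1|^{q-1+\ep}$, which is exactly the claimed weight $(1-|z|)\dist(z,E)^{(q-1+\ep)_+}$ (using $|z-1|\le 2$ to drop the $(\cdot)_+$ when the exponent is negative, at the cost of absorbing into the constant). Summing the finitely many local contributions from Step 1 and the global bounded contribution gives \eqref{e22} with $C(\ep,q,E)$ depending on $N=\card E$ and the geometry of $E$ through the sizes of the localizing arcs. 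A cleaner packaging, which avoids conformal maps to the half-plane, would be to stay on the disk and apply Jensen's formula \eqref{e-jensen} directly but to $f$ multiplied by an outer function $O_\ep$ whose modulus on $\bt$ equals $\exp(-D\,\dist(\cdot,E)^{-(q-1+\ep)})$ near $E$ (chosen so this is integrable, which it is since $q-1+\ep<$ the relevant threshold after the regularization) and such that $|O_\ep(z)|$ inside dominates $\exp(-cD\,\dist(z,E)^{-q})\cdot\dist(z,E)^{(q-1+\ep)}$ — then $fO_\ep\in H^\infty$-ish with controlled $L^1$ boundary norm, and Jensen gives the sum with the extra factor $|O_\ep(z)|$, hence the weight, directly. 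Either way the analytic input is a quantitative Jensen estimate plus a carefully chosen outer majorant, and the bookkeeping around the exponent $q$ and the parameter $\ep$ is where the care goes.
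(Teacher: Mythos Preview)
Your outline contains two genuine gaps, and the alternative route at the end is incorrect.

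\textbf{Step 1 is not a valid reduction.} You cannot get a Blaschke bound for zeros away from $E$ merely because $|f|\le e^{CD}$ on arcs of $\bt$ away from $E$: the ordinary Blaschke condition needs $\log^+|f|\in L^1(\bt)$, and $\int_\bt \dist(\zeta,E)^{-q}\,dm=\infty$ for $q\ge 1$. More seriously, the function $f$ is global. After sending $\zeta_j=1$ to $\infty$, the remaining $\zeta_k$ become finite points $i\beta_k$ on the imaginary axis, and Carleman's boundary integral picks up contributions $\asymp D|t-\beta_k|^{-q}$ near each of them, divergent for $q\ge1$. The other singularities cannot be localized away; they must be handled simultaneously with the one at infinity.

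\textbf{The outer-function alternative fails.} An outer $O_\ep$ with $|O_\ep(\zeta)|=\exp(-D\dist(\zeta,E)^{-(q-1+\ep)})$ exists only for $q<2-\ep$, and even then the Poisson extension of $-D\dist(\cdot,E)^{-\alpha}$ with $\alpha=q-1+\ep<q$ is only $\asymp -D\dist(z,E)^{-\alpha}$ inside, so $\log|fO_\ep(z)|\le D\dist(z,E)^{-q}-cD\dist(z,E)^{-(q-1+\ep)}$, which is still $\sim D\dist(z,E)^{-q}$ near $E$. No outer function on $\bd$ can have interior decay $\exp(-cD\dist(z,E)^{-q})$, since $\dist(\cdot,E)^{-q}\notin L^1(\bt)$.

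\textbf{What the paper actually does.} It stays in the disk and excises small caps at \emph{all} $\zeta_s$ at once. For each dyadic annulus $\Omega_{n,k}=\{2^{-n-1}<|z-\zeta_k|\le 2^{-n}\}$ it works on the domain $\Delta_{n,k}$ bounded by $\bt$ and $N$ arcs $\Gamma_s=\partial B(\zeta_s,\gamma_s)\cap\bd$ (with $\gamma_k\asymp 2^{-n}$, the others fixed). On each $\Gamma_s$ the growth of $f$ is cancelled by an explicit outer factor $g_s$ built from the harmonic measure of the boundary arc $\{|\zeta-\zeta_s|\le\gamma_s\}$: one has $\log|g_s|\ge D\gamma_s^{-q}$ on $\Gamma_s$ while $\log|g_s(0)|\le 2D\gamma_s^{1-q}$. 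A separate lemma shows that each Blaschke factor $b_{z_j}$ with $z_j\in\Omega_{n,k}$ satisfies $\log|b_{z_j}(z)|^{-1}\le \tfrac{1}{4N\gamma_s}\log|z_j|^{-1}$ on every $\Gamma_s$ (using that $|z_j-\zeta_s|\ge M\gamma_s$), and this loss is absorbed by further harmonic-measure factors $g_{j,s}$. The maximum principle on $\Delta_{n,k}$ then yields $\sum_{z_j\in\Omega_{n,k}}(1-|z_j|)\le C(q,N)\,D\,\gamma_k^{1-q}$, after which inserting the weight $\dist(z_j,E)^{q-1+\ep}\le(2M\gamma_k)^{q-1+\ep}$ and summing over $n$ and $k$ gives the theorem. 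Your Carleman route can in principle be completed, but only by performing the analogous cap-removal around the finite boundary singularities---which is exactly the step the proposal skips.
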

Here, $x_+=\max\{x,0\}$.

Clearly, \eqref{e22} is a Blaschke-type condition. The classical
Blascke condition is valid for functions from Hardy spaces
$H^p(\bd), 0<p\le\infty$, or, more generally, from the Nevanlinna
class $\mathcal N$, see Garnett \cite[Ch. 2]{ga}, and follows from
the Poisson--Jensen formula:
\begin{equation}\label{e100} \sum_{z\in
Z_f} (1-|z|)\le \sup_r \int_{\bt}\log|f(r\z)|\,dm -\log|f(0)|,
\end{equation}
where $dm$ is the normalized Lebesgue measure on $\bt$. With small
modifications one can write its analogs for the Bergman and the
Korenblum spaces, see Hedenmalm-Korenblum-Zhu \cite{he1}. The
specifics of our particular problem lead us to consider weights with
a finite number of ``exponential singularities" at the boundary in
addition to the radial growth of the weight.

Note that for $q<1$ the function $f$ \eqref{1} is in $\mathcal N$,
and \eqref{e22} follows directly from \eqref{e100}. So in the proof
of \eqref{e22} we can assume $q\ge 1$.

\begin{theorem}\label{t2} Let  $f\in\ca(\bd),\  |f(0)|=1$,  and
\begin{equation}\label{e101}
|f(z)|\le \exp\Bigl(\frac{D_1}{(1-|z|)^p\dist(z,E)^q}\Bigr),
\end{equation}
where $p,q\ge 0$.
Then for any $\ep>0$,
\begin{equation}\label{e21}
\sum_{z\in Z_f} (1-|z|)^{p+1+\varepsilon}\dist(z,E)^{(q-1+\varepsilon)_+}\le
C(\ep,p,q, E)\, D_1.
\end{equation}
\end{theorem}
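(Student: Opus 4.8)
The plan is to deduce Theorem~\ref{t2} from Theorem~\ref{t1} by applying the latter to the radial dilates of $f$ and integrating over the dilation parameter. For $s\in(0,1)$ set $g_s(z)=f(sz)$; then $g_s\in\ca(\bd)$, $|g_s(0)|=|f(0)|=1$, and the zeros of $g_s$ in $\bd$ are exactly the points $z_j/s$ with $z_j\in Z_f$ and $|z_j|<s$, counted with multiplicity. Since $1-|sz|\ge 1-s$, and since (by the triangle inequality and $\dist(sz,\bt)=1-|sz|\ge 1-s$, using $E\subset\bt$)
\[
\dist(sz,E)\ \ge\ \max\bigl(\dist(z,E)-(1-s),\ 1-s\bigr)\ \ge\ \tfrac12\,\dist(z,E),
\]
\eqref{e101} gives $|g_s(z)|\le\exp\bigl(2^q(1-s)^{-p}D_1\big/\dist(z,E)^q\bigr)$ for $z\in\bd$. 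Hence $g_s$ satisfies the hypothesis \eqref{1} of Theorem~\ref{t1} with the same $E$ and $q$ and with $D=2^q(1-s)^{-p}D_1$, so \eqref{e22} yields, for every $\e>0$,
\[
\sum_{\substack{z_j\in Z_f\\ |z_j|<s}}\Bigl(1-\tfrac{|z_j|}{s}\Bigr)\,\dist\!\Bigl(\tfrac{z_j}{s},E\Bigr)^{(q-1+\e)_+}\ \le\ 2^q\,C(\e,q,E)\,(1-s)^{-p}\,D_1 .
\]

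Next I would multiply this by the weight $(1-s)^{p-1+\e}$ and integrate in $s$ over $(0,1)$. On the right the factor $(1-s)^{-p}$ cancels, leaving $2^q C(\e,q,E)\,D_1\int_0^1(1-s)^{-1+\e}\,ds=2^q\e^{-1}C(\e,q,E)\,D_1$; on the left, by Tonelli, one obtains $\sum_{z_j\in Z_f}\pp(z_j)$ with
\[
\pp(z_j)=\int_{|z_j|}^1\Bigl(1-\tfrac{|z_j|}{s}\Bigr)\,\dist\!\Bigl(\tfrac{z_j}{s},E\Bigr)^{(q-1+\e)_+}(1-s)^{p-1+\e}\,ds .
\]
The heart of the matter is the pointwise lower bound $\pp(z_j)\ge c(p,q,\e)\,(1-|z_j|)^{p+1+\e}\dist(z_j,E)^{(q-1+\e)_+}$. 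For $|z_j|\ge\tfrac12$ I would restrict the $s$-integral to the range $1-s\in\bigl(\tfrac12(1-|z_j|),\,\tfrac34(1-|z_j|)\bigr)$, of length $\asymp 1-|z_j|$: there $1-|z_j|/s\asymp 1-|z_j|$ and $(1-s)^{p-1+\e}\asymp(1-|z_j|)^{p-1+\e}$, while $z_j/s$ lies within $1-|z_j|$ of $z_j$ and satisfies $1-|z_j/s|\gtrsim 1-|z_j|$, so that $\dist(z_j/s,E)\gtrsim\dist(z_j,E)$; multiplying these lower bounds by the length of the interval gives the claim. This yields $\sum_{|z_j|\ge 1/2}(1-|z_j|)^{p+1+\e}\dist(z_j,E)^{(q-1+\e)_+}\lesssim D_1$, which is \eqref{e21} apart from the finitely many zeros of $f$ in $\{|z_j|<\tfrac12\}$.

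Those remaining zeros are disposed of directly: Jensen's formula for $f$ on $|z|=\tfrac34$, together with $\dist(z,E)\ge\tfrac14$ there, shows $\card\{z_j:|z_j|<\tfrac12\}\le C(p,q)\,D_1$, so their total contribution to \eqref{e21} is at most $2^{(q-1+\e)_+}C(p,q)\,D_1$. (No separate discussion of the case $q<1$ is needed: for small $\e$ one then has $(q-1+\e)_+=0$ and all the $\dist$-factors above are simply $1$.) I do not anticipate a serious analytic obstacle: once the dilation step is in place the rest is elementary bookkeeping, the only point requiring genuine care being the initial observation that $z\mapsto sz$ turns the radial singularity $(1-|z|)^{-p}$ of \eqref{e101} into the harmless constant $(1-s)^{-p}$, absorbed afterwards by the integrable weight $(1-s)^{p-1+\e}$, together with the routine verification of the comparabilities behind the lower bound for $\pp(z_j)$.
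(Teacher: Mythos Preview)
Your proposal is correct and follows essentially the same approach as the paper: apply Theorem~\ref{t1} to the dilates $f(sz)$, which converts the radial factor $(1-|z|)^{-p}$ into the constant $(1-s)^{-p}$, and then aggregate over the dilation parameter. The only difference is cosmetic: the paper uses the dyadic scale $\tau_n=1-2^{-n}$ and sums over the annuli $\tau_{n-2}\le|z_j|<\tau_{n-1}$ rather than integrating in $s$, which incidentally handles all zeros at once and spares your separate Jensen argument for $|z_j|<\tfrac12$.
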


One can vary the degrees of the singularities $\z_j$ in \eqref{1}
and \eqref{e101}. More precisely,

\begin{theorem}\label{t31} Let
$$
h_1(z)=\lp\prod^N_{j=1} |z-\z_j|^{q_j}\rp^{-1},\qquad
h_2(z)=\lp(1-|z|)^p\prod^N_{j=1} |z-\z_j|^{q_j}\rp^{-1},
$$
with $\{q_j\}_{j=1,\dots, N},\ q_j\ge 0$. Furthermore, let $f_j
\in\ca(\bd)$, $|f_j(0)|=1$, $j=1,2$, satisfy
$$
|f_j(z)|\le\exp(K_j h_j(z)). 
$$
Then
\begin{eqnarray}\label{e102}
\sum_{z\in Z_{f_1}} (1-|z|)\prod^N_{j=1}
|z-w_j|^{(q_j-1+\ep)_+}&\le& C(\ep, \{q_j\},E) K_1, \\ \sum_{z\in
Z_{f_2}} (1-|z|)^{p+1+\ep}\prod^N_{j=1} |z-w_j|^{(q_j-1+\ep)_+}&\le&
C(\ep, p, \{q_j\},E) K_2.  \label{e103}
\end{eqnarray}
\end{theorem}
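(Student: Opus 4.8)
We derive Theorem~\ref{t31} from Theorems~\ref{t1} and~\ref{t2}, together with a close reading of their proofs, by a localization at the boundary singular points. We give the argument for $f_1$, $h_1$; the case of $f_2$, $h_2$ is entirely parallel, with the weight $(1-|z|)^{p+1+\ep}$ replacing $(1-|z|)$ and Theorem~\ref{t2} replacing Theorem~\ref{t1}. Fix $\eta\in(0,1)$ smaller than half the least distance between distinct points of $E$, put $B_j=\{z\in\bd:|z-\z_j|<\eta\}$ for $j=1,\dots,N$, and $B_0=\bd\setminus\bigcup_{j=1}^N B_j$. The sets $B_j$ are pairwise disjoint; on $B_j$ one has $\dist(z,E)=|z-\z_j|$ and $|z-\z_i|\in[\eta,2]$ for $i\ne j$, while $\dist(z,E)\ge\eta$ on $B_0$. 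We split the sum in \eqref{e102} into the part over $Z_{f_1}\cap B_0$ and the parts over $Z_{f_1}\cap B_j$, $j=1,\dots,N$.

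Two elementary comparisons, with constants depending only on $E$ and $\{q_j\}$, do the bookkeeping. First,
\begin{equation*}
h_1(z)\le C_E\sum_{j=1}^N|z-\z_j|^{-q_j}\le C_E'\,\dist(z,E)^{-Q},\qquad Q:=\max_{1\le j\le N}q_j ,
\end{equation*}
so $|f_1(z)|\le\exp\bigl(C_E'K_1\dist(z,E)^{-Q}\bigr)$ on $\bd$, and Theorem~\ref{t1}, applied with $D=C_E'K_1$ and exponent $Q$, gives $\sum_{z\in Z_{f_1}}(1-|z|)\dist(z,E)^{(Q-1+\ep)_+}\le C(\ep,Q,E)\,C_E'K_1$. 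Restricting to $B_0$, where $\dist(z,E)\ge\eta$, and bounding $\prod_{i=1}^N|z-\z_i|^{(q_i-1+\ep)_+}\le 2^{\sum_i(q_i-1+\ep)_+}$ there, we obtain the desired bound for the $B_0$-part of \eqref{e102}. Second, on $B_j$ only the singularity $\z_j$ is active, so
\begin{equation*}
h_1(z)\le C_E\,|z-\z_j|^{-q_j},\qquad
\prod_{i=1}^N|z-\z_i|^{(q_i-1+\ep)_+}\le C_E\,|z-\z_j|^{(q_j-1+\ep)_+}\qquad (z\in B_j).
\end{equation*}
Hence the $B_j$-part of \eqref{e102} is dominated by $C_E\sum_{z\in Z_{f_1}\cap B_j}(1-|z|)|z-\z_j|^{(q_j-1+\ep)_+}$, and it remains to bound this quantity by $C(\ep,\{q_i\},E)\,K_1$, using only the local estimate $\log|f_1(z)|\le C_EK_1|z-\z_j|^{-q_j}$, valid near $\z_j$.

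This last step is the main point and is not a black-box consequence of Theorem~\ref{t1}: the global majorant $K_1h_1$ of $\log|f_1|$ cannot be dominated by any single-singularity majorant $D|z-\z_j|^{-q_j}$, since the remaining singular points spoil such a bound. One must instead revisit the proof of Theorem~\ref{t1}, in which the contribution of the zeros near a boundary singular point $\z$ to the Blaschke-type sum is estimated through a Jensen / Poisson--Jensen-type quantity built from $\log|f_1|$ on a neighborhood of $\z$ alone; that quantity is finite whenever the growth exponent there is finite, and produces a bound of the shape \eqref{e22} governed by the local exponent. Applying this local estimate with $\z=\z_j$, exponent $q_j$, and a constant proportional to $K_1$ bounds the $B_j$-part, and summing over $j=0,1,\dots,N$ gives \eqref{e102}; the same scheme with Theorem~\ref{t2} gives \eqref{e103}. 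I expect the delicate point to be exactly this localization of the argument of Theorem~\ref{t1} (resp.\ Theorem~\ref{t2}) to a neighborhood of a single $\z_j$ in the presence of the other singularities; once that is carried out, replacing the common exponent $q$ by the individual $q_j$ is routine bookkeeping.
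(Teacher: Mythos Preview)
The paper does not give a separate proof of Theorem~\ref{t31}; it simply states that ``the proofs of \eqref{e102}, \eqref{e103} are similar'' to those of Theorems~\ref{t1} and~\ref{t2}. The intended route is therefore to rerun the proof of Theorem~\ref{t1} verbatim, replacing the single exponent $q$ by the individual exponents $q_s$ at each point: one takes $\rho_s=4K_1/\gamma_s^{q_s}$ in the definition of the outer functions $g_s$, and the outer majorant $F^*$ is built from $K_1 h_1(\zeta)$ on $\bigcup_l\hat\Gamma_l$. Since the proof of Theorem~\ref{t1} already treats the zeros in $\Omega_{n,k}$ with $\gamma_k$ small and all other $\gamma_s$ of fixed size $2^{-L-1}/M$, the only small term in both $\sum_l\gamma_l^{1-q_l}$ and $\log|F^*(0)|$ is the one carrying the exponent $q_k$; everything else contributes a constant depending on $E,\{q_j\}$. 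The analogue of \eqref{e224} thus reads $\sum_{z_j\in\Omega_{n,k}}(1-|z_j|)\le C\,K_1\,\gamma_k^{1-q_k}$, and multiplying by $\prod_i|z_j-\zeta_i|^{(q_i-1+\ep)_+}\le C(2M\gamma_k)^{(q_k-1+\ep)_+}$ on $\Omega_{n,k}$ finishes as before.

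Your plan arrives at the same destination but with an unnecessary detour. The preliminary split into $B_0$ and the caps $B_j$, with the black-box application of Theorem~\ref{t1} (exponent $Q=\max q_j$) on $B_0$, is redundant: once you concede that the $B_j$-parts force you back into the proof, you have rerun the whole argument anyway, and $B_0$ is exactly the region $\Omega_0$ handled at the end of that proof. More importantly, your description of the revisited step is not quite what actually happens. The estimate for zeros in $\Omega_{n,k}$ is \emph{not} obtained from ``a Jensen/Poisson--Jensen-type quantity built from $\log|f_1|$ on a neighborhood of $\zeta_k$ alone''; the construction of $F$ and the comparison with the outer function $F^*$ are global on $\bd$, using outer functions $g_s$ at \emph{all} the points $\zeta_s$ simultaneously, and the bound $|F|\le|F^*|$ is applied on the full boundary $\partial\Delta_{n,k}$. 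What localizes the outcome is only that $\gamma_k$ is the sole small parameter, so the other singularities contribute harmless constants. If you try to run a genuinely local Jensen argument in a cap near $\zeta_j$, you will need to control $\log|f_1|$ on the inner boundary of the cap, where you have no good bound. So your outline is correct in spirit, but the sentence you flag as ``the delicate point'' should be replaced by: rerun the global argument of Theorem~\ref{t1} with $q_s$ in place of $q$ at each $\zeta_s$, and observe that only the $\zeta_k$-terms survive in the final estimate for $\Omega_{n,k}$.
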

For the sake of simplicity, we prove relations \eqref{e22},
\eqref{e21}. The proofs of \eqref{e102}, \eqref{e103} are similar.

The starting idea of the work is close in
spirit to an interesting paper by Demuth-Katriel \cite{dek}. To obtain
counterparts of Lieb-Thirring bounds, the authors look at the
difference of two semigroups generated by two continuous
Schr\"odinger operators and they apply the classical Poisson-Jensen
formula. They work with nuclear and Hilbert-Schmidt perturbations
and the potential has to be from the Kato class. Our methods seem to
be more straightforward. The computations are simpler and they are
valid for $\css_p$-perturbations, $p\ge1$. In particular, we do not
require  the {\it self-adjointness} of the perturbed operator.

As usual, $\bd_r=\{z : |z|<r\}, \ \bd=\bd_1, \bt_r=\{z: |z|=r\},
\bt=\bt_1,$ and $B(z_0,\d)=\{z: |z-z_0|<\d\}$. $C$ is a constant
changing from one relation to another one.

We proceed as follows. In Section 2 we prove Theorem \ref{t1} and
derive Theorem \ref{t2} from it. In Section 3 we discuss
applications to complex Jacobi matrices.

 \section{Proofs of Theorems \ref{t1} and \ref{t2}}\label{s1}

Given a circular arc $\gg=[e^{it_1}, e^{it_2}]$, denote by
$\o(z;t_1,t_2)$ its harmonic measure with respect to the unit disk
$\bd$. An explicit formula is available (see Garnett \cite[Ch. 1, Exercise 3]{ga})
$$\o(z;t_1,t_2)=\frac1{\pi}\lp\a(z)-\frac{t_1-t_2}2\rp, $$
where $\gg$ is seen from a point $z$ under the angle $\a$. We use
the notation $\o_\g$ for the symmetric arc
\begin{equation*}\label{e201}
\gg=\{\z\in\bt:|1-\z|\le\g\}=[e^{-it(\g)},
e^{it(\g)}], \quad \sin\frac{t(\g)}2=\frac{\g}2.
\end{equation*}
Here $\g=\g(E)<1/500N$ is a small parameter which will depend on $E$. By
the Mean Value Theorem
\begin{equation}\label{e202}
\frac{\g}{\pi}\le\o_\g(0)=\frac{t(\g)}{\pi}\le\frac{\g}2\,.
\end{equation}
Next,
\begin{equation}\label{e203}
\o_\g(z)\ge \frac12-\frac{t(\g)}{\pi}\ge \frac14 \qquad
\end{equation}
for $|1-z|\le\g$.
An outer function
\begin{equation}\label{e204}
g_\g(z)=e^{\o_\g(z)+i\wt{\o_\g(z)}}, \quad |g_\g(\z)|=\left\{%
\begin{array}{ll}
    e, & \hbox{$|1-\z|\le\g$,} \\
    1, & \hbox{$|1-\z|>\g$,} \\
\end{array}%
\right. 
\end{equation}
where $\z\in\bt$, will play a key role in what follows. Clearly, $\o_\g=\log|g_\g|$, and
\begin{equation}\label{e205}
1\le |g_\g(z)|\le e, 
\end{equation}
for $z\in\bd$.

We need a bound for the Blaschke product
$$ b_\l(z):=\frac{z-\l}{1-\bar\l z} $$
in the case when the parameter $\l$ and variable $z$ are ``well-separated''.

\begin{lemma}\label{l1}
Let $M=200N$. Then for $|1-z|=\g$, $|1-\l|\ge M\g$ one has
\begin{equation}\label{e206}
\log\frac1{|b_\l(z)|}\le \frac1{4N\g}\,\log\frac1{|\l|}\,.
\end{equation}
\end{lemma}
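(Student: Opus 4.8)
The plan is to reduce \eqref{e206} to a routine estimate via the standard identity for a Blaschke factor. First I would recall that
\[
|1-\bar\l z|^2=|z-\l|^2+(1-|\l|^2)(1-|z|^2),
\]
so that
\[
\frac1{|b_\l(z)|^2}=1+\frac{(1-|\l|^2)(1-|z|^2)}{|z-\l|^2}.
\]
Taking logarithms and using $\log(1+t)\le t$ for $t\ge0$ gives the key inequality
\[
\log\frac1{|b_\l(z)|}\le\frac12\,\frac{(1-|\l|^2)(1-|z|^2)}{|z-\l|^2}.
\]
Everything after this is bookkeeping with the hypotheses $|1-z|=\g$ and $|1-\l|\ge M\g$.

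Next I would bound the three factors. For the numerator: $1-|z|^2\le 2(1-|z|)\le 2|1-z|=2\g$, and $1-|\l|^2\le 2(1-|\l|)\le 2\log(1/|\l|)$, the last step using the elementary inequality $1-t\le -\log t$ for $0<t\le1$. For the denominator, the triangle inequality yields $|z-\l|\ge|1-\l|-|1-z|\ge(M-1)\g$. Substituting,
\[
\log\frac1{|b_\l(z)|}\le\frac12\cdot\frac{2\log(1/|\l|)\cdot2\g}{(M-1)^2\g^2}=\frac{2}{(M-1)^2\g}\,\log\frac1{|\l|}.
\]

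Finally I would check the numerical constant: with $M=200N$ and $N\ge1$ one has $(M-1)^2=(200N-1)^2\ge 8N$, hence $\frac{2}{(M-1)^2}\le\frac1{4N}$, which is exactly \eqref{e206}. I do not expect a genuine obstacle here; the only points that need a moment's care are recalling the identity for $1/|b_\l|^2$ and invoking $1-t\le\log(1/t)$ to turn the factor $1-|\l|$ into $\log(1/|\l|)$ — this is what makes the right-hand side of \eqref{e206} the natural quantity, since $\log(1/|\l|)$ is precisely what enters the Poisson--Jensen bound \eqref{e100}.
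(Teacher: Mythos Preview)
Your proof is correct and substantially simpler than the paper's. You go straight to the explicit identity
\[
\frac1{|b_\l(z)|^2}=1+\frac{(1-|\l|^2)(1-|z|^2)}{|z-\l|^2}
\]
and finish with $\log(1+t)\le t$, $1-|\l|\le\log(1/|\l|)$, and the triangle inequality; the numerical check $(200N-1)^2\ge 8N$ is immediate. The paper instead argues by harmonic comparison: it sets $V_\l=\log(1/|b_\l|)$, introduces a Poisson-kernel-type majorant $W_\l(z)=\dfrac{(1+\e)^2-|z|^2}{|(1+\e)w-z|^2}$ with a carefully chosen boundary point $w\in\bt$ (which must be pushed away from $1$ when $\l/|\l|$ is too close), applies the Maximum Principle on the annular-type domain $\bd\setminus B(\l,(1-|\l|)/4)$, and then lets $\e\to0$. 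This yields the same conclusion but at the cost of a page of case analysis and constant-tracking. Your approach sacrifices nothing: the explicit identity already encodes the pointwise bound the paper recovers via comparison, and the resulting constant is in fact sharper than needed. The paper's route would generalize to situations where one only has harmonic majorization rather than an algebraic formula for the Blaschke modulus, but for this lemma that extra flexibility is not used.
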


\medskip
\nt{\it Proof}. Obviously, we asume $\l\not=0$. Consider the domain
$$ U:=\bd\backslash B(\l,r), \quad r=\frac{1-|\l|}4\,, $$
and two harmonic functions on $U$
$$ V_\l(z):=\log\frac1{|b_\l(z)|}, \quad
W_\l(z):=\frac{(1+\e)^2-|z|^2}{|(1+\e)w-z|^2}\,, $$ where parameters
$\e=\e(\l)>0$ and $w=w(\l)\in\bt$ are chosen later on. Clearly,
$V_\l(\z)=0<W_\l(\z)$ for $\z\in\bt$, and we want to bound these
functions on $\pt B(\l,r)$.

This is easy for $V_\l$:
$$ \frac1{|b_\l(z)|}=\frac4{1-|\l|}\,|1-\bar\l z|, $$
and for $z=\l+re^{it}$ we have $$ 1-\bar\l
z=1-|\l|^2-\frac{1-|\l|}4\,\bar\l e^{it}$$ so $|1-\bar\l
z|<3(1-|\l|)$ and
\begin{equation*}\label{e207}
V_\l(z)<\log 30<5, 
\end{equation*}
where $z\in\pt B(\l,r)$.

The problem is more delicate for the lower bound on $W_\l$. Let
 $\l=|\l|e^{i\th}$ and  $w=e^{i\p}$.
 Then, for $z\in\pt B(\l,r)$
\begin{equation}\label{e209}
(1+\e)^2-|z|^2>1-|z|^2=1-|\l|^2-r^2-2r|\l|\cos(\th-t)>\frac13(1-|\l|).
\end{equation}
Next, we want to have a bound $O((1-|\l|)^2)$ for the denominator of
$W_\l$
\begin{equation}\label{e210}
|(1+\e)w-z|^2=|w-z|^2+\e^2+2\e\re(1-\bar wz).
\end{equation}
For the first term,
\begin{align*}
|w-z|^2 &=1-2\re\bar wz+|\l|^2+r^2+2r|\l|\cos(\th-t)=S_1+S_2, \\
S_1 &=(1-|\l|)^2+r^2-2r(1-|\l|)\cos(\th-t), \\
S_2 &=2|\l|+2r\cos(\th-t)-2\re\bar wz.
\end{align*}
For $S_1$ one already has $S_1<2(1-|\l|)^2$. To get
$$ S_2=2|\l|\lp 1-\cos(\th-\p)\rp+2r\lp
\cos(\th-t)-\cos(t-\p)\rp=O\lp (1-|\l|)^2\rp,
$$
we choose $w$ in the following way. If $|\th|\ge t(M\g/2)$, we put $\p=\th$, so
$w=\l/|\l|$ and $S_2=0$. If $|\th|<t(M\g/2)$, we put
$$ \p=\left\{%
\begin{array}{rr}
    t(M\g/2), & \hbox{$0\le\th<t(M\g/2)$,} \\
    -t(M\g/2), & \hbox{$-t(M\g/2)<\th<0$.} \\
\end{array}%
\right.
$$
Then, by \eqref{e202}, $|\p-\th|<t(M\g/2)\le\pi M\g/4$. On the other
hand, by the hypothesis of lemma
\begin{align*}
M\g &\le |1-\l|=\left|1-|\l|+|\l|(1-e^{i\th})\right| \\
&\le 1-|\l|+|\th|\le 1-|\l|+\frac{\pi M\g}4,
\end{align*}
so $1-|\l|\ge M\g(1-\pi/4)$ and
$|\p-\th|\le\frac{\pi}{4-\pi}(1-|\l|)$. Hence
\begin{align*}
S_2 &\le 4\sin^2\frac{\th-\p}2+4r\left|\sin\frac{\th-\p}2 \sin\frac{\th+\p-2t}2\right| \\
&\le
\lp\frac{\pi}{4-\pi}\rp^2(1-|\l|)^2+\frac{\pi}{5(4-\pi)}(1-|\l|)^2<26(1-|\l|)^2
\end{align*}
and $|w-z|^2<28(1-|\l|)^2$.

Note also that in both cases above we have $|\p|\ge t(M\g/2)$, that
is
\begin{equation}\label{e211}
|1-w|\ge \frac{M\g}2.
\end{equation}

To fix the second term in \eqref{e210} we take $0<\e<(1-|\l|)^2$, so
$$
\e^2+2\e\re(1-\bar wz)\le 5\e<5(1-|\l|)^2, $$ and eventually
\begin{equation*}\label{e212}
|(1+\e)w-z|^2<31(1-|\l|)^2.
\end{equation*}

It follows now from the above inequality and \eqref{e209} that $W_\l$ admits
the lower bound
$$ W_\l(z)>
\frac{1/3(1-|\l|)}{31(1-|\l|)^2}>\frac1{100}\frac1{1-|\l|}, 
$$
where $z\in\pt B(\l,r)$. So,
$$ \log\frac1{|\l|}\,W_\l(z)>\frac1{100} 
$$
for these $z$.
By the Maximum Principle $V_\l<500\, W_\l$ on
$U$, and by letting $\e\to 0$ we obtain
\begin{equation*}\label{e213}
\log\frac1{|b_\l(z)|}<500\log\frac1{|\l|}\,\frac{1-|z|^2}{|w-z|^2},
\end{equation*}
where $z\in U$.

Note that by the assumption of the lemma $|1-\l|=l\ge M\g$, so
$1-|\l|\le |1-\l|=l$, and if $z\in\pt B(\l,r)$, then
\begin{equation*}\label{e214}
|1-\l-re^{it}|\ge l-\frac{1-|\l|}4\ge \frac34\,M\g>2\g,
\end{equation*}
which means that the arc $\{|1-z|=\g,\ |z|<1\}$ lies inside $U$. For
such $z$ by \eqref{e211} $|w-z|\ge|1-w|-|1-z|\ge
\bigl(M/2-1\bigr)\g$, so
$$ \frac{1-|z|^2}{|w-z|^2}<\frac2{\bigl(M/2-1\bigr)^2 \g}\,. $$
The proof is complete.  \hfill $\Box$

\medskip
An invariant form of \eqref{e206} is
\begin{equation*}\label{e2141}
\log\frac1{|b_\l(z)|}\le \frac1{4N\g}\,\log\frac1{|\l|}\,,
\end{equation*}
for any $\z\in\bt$ with the properties  $|\z-z|=\g,\  |\z-\l|\ge M\g$.

\smallskip
We decompose the unit disk into a union of disjoint sets
$$ \bd=\oo\bigcup\lp\bigcup_{n,k}\oo_{n,k}\rp, \quad
\oo_{n,k}=\{z\in\bd: 2^{-n-1}<|z-\z_k|\le 2^{-n}\},
$$
where $k=1,2,\ldots,N$, $n=L,L+1,\ldots$. Here $L=L(E)$ is a large
parameter, so that
\begin{equation}\label{e215}
{\rm dist}\,(\oo_{L,k},\z_s)\ge 2^{-L}, 
\end{equation}
where $s\not=k,\ k=1,\ldots,N$.
The latter obviously yields the same inequality for ${\rm
dist}\,(\oo_{n,k},\z_s)$ for all $n\ge L$.

Let us fix a pair $(n,k)$, and define numbers
$$ \g_s=\left\{%
\begin{array}{ll}
    2^{-L-1}/M, & \hbox{$s\not=k$,} \\
    2^{-n-1}/M, & \hbox{$s=k$,} \\
\end{array}%
\right. 
$$
where $s=1,\ldots,N$, and $M$ is from Lemma \ref{l1}. Now
\eqref{e215} reads
\begin{equation}\label{e216}
{\rm dist}\,(\oo_{n,k},\z_s)\ge M\g_s. 
\end{equation}

Define three sets of arcs
$$ \gg_s:=\pt B(\z_s,\g_s)\bigcap\bd, \qquad
\tilde\gg_s:=\{\z\in\bt:|\z-\z_s|\le\g_s\}, $$ the arcs
$\tilde\gg_s$ and $\tilde\gg_{s+1}$ are separated by
$\hat\gg_s\subset\bt$, $s=1,\ldots,N$. Set  $\dd_{n,k}\subset\bd$ in a way that
$$\pt\dd_{n,k}=
\lp\bigcup_{s=1}^N \gg_s\rp\bigcup\lp\bigcup_{s=1}^N \hat\gg_s\rp.
$$

Let $\{z_j\}_{j=1}^m$ be a finite number of zeros of $f$ (counting multiplicity) in
$\oo_{n,k}$, and
$$ b_j:=b_{z_j}(z)=\frac{z-z_j}{1-\bar z_j z}\,, 
$$
for $j=1,\ldots,m$.
Consider the functions
\begin{align*}
g_s(z) &=g_{\g_s}^{\rho_s}(z\bar\z_s), \qquad \ \ \ \rho_s
=\frac{4D}{\g_s^q}\,, \\
g_{j,s}(z) &=g_{\g_s}^{\rho_{j,s}}(z\bar\z_s), \qquad \rho_{j,s}
=\frac1{N\g_s}\,\log\frac1{|z_j|}\,,
\end{align*}
with $s=1,\ldots,N$, $j=1,\ldots,m$. Recall that $g_s$ are defined in \eqref{e204}.
For $z\in\gg_s$ we have
$$
\log|b_j(z)g_{j,s}(z)|=\log|b_j(z)|+\rho_{j,s}\log|g_{\g_s}(z\bar\z_s)|.
$$
By \eqref{e216} we can apply Lemma \ref{l1}, which along with
\eqref{e203} gives
$$
\log|b_j(z)g_{j,s}(z)|\ge\frac1{4N\g_s}\log|z_j|+\frac1{4N\g_s}\log\frac1{|z_j|}=0,
$$
so
\begin{equation}\label{e217}
|b_j(z)g_{j,s}(z)|\ge 1, 
\end{equation}
for $z\in\gg_s,\  j=1,\ldots,m$.

Note also, that, for $z\in\bd$ (see \eqref{e205})
$$ |b_j(z)g_{j,s}(z)|\le e. 
$$

Next, by \eqref{e203}
\begin{equation}\label{e218}
\log|g_s(z)|=\rho_s\log|g_{\g_s}(z\bar\z_s)|\ge
\frac{D}{\g_s^q}=\frac{D}{|z-\z_s|^q}, 
\end{equation}
where $z\in\gg_s$.

\begin{lemma}\label{l2}
A function
\begin{equation}\label{e219}
F(z):=\frac{f(z)}{\prod_{j=1}^m b_j(z)\hat g_j(z) \, \prod_{l=1}^N
g_l(z)}\,, \quad \hat g_j(z):=\prod_{l=1}^N g_{j,l}(z)
\end{equation}
is analytic on $\bd$, and satisfies
\begin{equation}\label{e220}
\log|F(z)|\le 0, 
\end{equation}
for $z\in\bigcup_{l=1}^N\gg_l$,
\begin{equation} \label{e221}
\log|F(\z)|\le\frac{D}{{\rm dist}(\z,E)^q}, 
\end{equation}
for $\z\in\bigcup_{l=1}^N\hat\gg_l$.
Furthermore,
\begin{equation}\label{e222}
\log|F(0)|\ge
\frac12\,\sum_{j=1}^m\log\frac1{|z_j|}-2D\sum_{l=1}^N\g_l^{1-q}.
\end{equation}
\end{lemma}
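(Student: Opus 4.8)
The plan is to check the four assertions one at a time, handling the three families of auxiliary factors $b_j$, $g_{j,l}$, $g_l$ that appear in \eqref{e219} separately. \emph{Analyticity} is immediate: $\prod_{j=1}^m b_j$ is analytic on $\bd$, vanishes (with multiplicity) exactly at $\{z_j\}_{j=1}^m$ and has poles $1/\bar z_j$ outside $\bd$, so $f/\prod_j b_j\in\ca(\bd)$ because the $z_j$ are zeros of $f$ counted with multiplicity; each $g_{\g_s}$ is the exponential of an analytic function, hence the $g_{j,l}$ and $g_l$ are analytic and zero-free on $\bd$, and dividing by them preserves analyticity.

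For \eqref{e220}, fix $z\in\gg_s$ for some $s$, so that $|z-\z_s|=\g_s$. Taking $L=L(E)$ large enough makes all the $\g_s$ small (note $\g_s\le 2^{-L-1}/M$ in every case), so $\z_s$ is the nearest point of $E$ to $z$, whence $\dist(z,E)=\g_s$ and $|f(z)|\le\exp(D/\g_s^q)$ by \eqref{1}. In the denominator of $F$, \eqref{e205} gives $|g_{j,l}(z)|\ge1$ and $|g_l(z)|\ge1$ for every $l$; \eqref{e216} lets us apply Lemma \ref{l1}, which is exactly \eqref{e217}, so $|b_j(z)\hat g_j(z)|=|b_j(z)g_{j,s}(z)|\prod_{l\ne s}|g_{j,l}(z)|\ge1$; and \eqref{e218} gives $|g_s(z)|\ge\exp(D/\g_s^q)$. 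Multiplying, the denominator is $\ge\exp(D/\g_s^q)\ge|f(z)|$, i.e. $\log|F(z)|\le0$.

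For \eqref{e221}, recall that $|b_j|\equiv1$ on $\bt$ and, by \eqref{e204}, $|g_{\g_l}(\z\bar\z_l)|=1$ whenever $|\z-\z_l|>\g_l$. Since $\hat\gg_l$ lies outside all the arcs $\tilde\gg_s$, every factor $g_{j,l}$ and $g_l$ has modulus $1$ on $\hat\gg_l$, so there $|F(\z)|=|f(\z)|\le\exp(D/\dist(\z,E)^q)$; the radial boundary values exist a.e.\ on $\hat\gg_l$ because $f$ is bounded in a neighbourhood of this arc, which is compactly separated from $E$ (alternatively, prove \eqref{e220}--\eqref{e221} for the dilates $z\mapsto f(\r z)$ and let $\r\to1$). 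Finally, for \eqref{e222} use $|f(0)|=1$ and $|b_j(0)|=|z_j|$ and bound the outer factors at the origin from above via $\o_\g(0)=\log|g_\g(0)|\le\g/2$ from \eqref{e202}: $|g_{j,l}(0)|\le e^{\rho_{j,l}\g_l/2}$ and $|g_l(0)|\le e^{\rho_l\g_l/2}$. With $\rho_{j,l}=\frac1{N\g_l}\log\frac1{|z_j|}$ the exponent $\sum_{l=1}^N\rho_{j,l}\g_l/2$ equals $\tfrac12\log\frac1{|z_j|}$, while with $\rho_l=4D/\g_l^q$ one gets $\sum_{l=1}^N\rho_l\g_l/2=2D\sum_l\g_l^{1-q}$; hence the denominator of $F(0)$ is at most $\exp\bigl(-\tfrac12\sum_j\log\frac1{|z_j|}+2D\sum_l\g_l^{1-q}\bigr)$, which is exactly \eqref{e222}.

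The individual estimates are routine bookkeeping once Lemma \ref{l1} and the functions $g_s$, $g_{j,s}$ are in hand; the two places that need care are (i) the claim $\dist(z,E)=\g_s$ on $\gg_s$, which is precisely what the choice of $L$ and the separation estimate \eqref{e216} are designed to guarantee, and (ii) the passage to boundary values in \eqref{e221}, handled by the dilation argument mentioned above.
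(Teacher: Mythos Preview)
Your proof is correct and follows the same approach as the paper's: you use \eqref{e217} and \eqref{e218} together with $|g_{j,l}|,|g_l|\ge1$ for \eqref{e220}, the boundary values \eqref{e204} and $|b_j|=1$ on $\bt$ for \eqref{e221}, and the harmonic-measure bound \eqref{e202} for \eqref{e222}. You are in fact slightly more careful than the paper about analyticity and about the existence of boundary values of $F$ on $\hat\gg_l$ (the dilation argument), points the paper leaves implicit.
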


\medskip
\nt{\it Proof}.  For $z\in\gg_s$, $s=1,\ldots,N$, we have
$$
\log|f(z)|\le \frac{D}{\dist(z,E)^q}=\frac{D}{|z-\z_s|^q},
$$
and $ |b_j|=|\hat g_j|=|g_s|=1$ for the rest of the boundary of $\dd_{n,k}$. So bounds \eqref{e220} and \eqref{e221} follow immediately from definition \eqref{e219}.

To prove \eqref{e222}, we write
$$ \log|F(0)|=-\sum_{j=1}^m\Bigl(\log|b_j(0)|+\log|\hat
g_j(0)|\Bigr) - \sum_{l=1}^N\log|g_l(0)|. $$
Apply the bound for the harmonic measure (see \eqref{e202})
\begin{align*}
\log|g_l(0)| &\le \frac{4D}{\g_l^q}\,\frac{\g_l}2=2D\g_l^{1-q}, \\
\log|\hat g_j(0)| &= \sum_{l=1}^N\rho_{j,l}\log|g_{\g_l}(0)|\le
\frac1N \sum_{l=1}^N
\frac1{\g_l}\log\frac1{|z_j|}\,\frac{\g_l}2=\frac12\log\frac1{|z_j|},
\end{align*}
so
$$ \log|b_j(0)|+\log|\hat g_j(0)|\le
\log|z_j|+\frac12\log\frac1{|z_j|}= -\frac12\log\frac1{|z_j|}.$$ The
proof of the lemma is complete. \hfill $\Box$

\medskip
\nt{\it Proof of Theorem \ref{t1}}. Define an outer function $F^*$ in
$\bd$ by its boundary values
$$
|F^*(\z)|=\left\{%
\begin{array}{ll}
    1, & \hbox{$\z\in\bigcup_{l=1}^N \tilde\gg_l$,} \\
    \dsp\exp\Bigl(\frac{D}{{\rm dist}(\z,E)^q}\Bigr), &
    \hbox{$\z\in\bigcup_{l=1}^N \hat\gg_l$.} \\
\end{array}%
\right.
$$
As $|F^*|\ge1$ on $\bt$, then $|F^*|\ge1$ in the whole disk, so by
Lemma \ref{l2} $|F|\le|F^*|$ on $\pt\dd_{n,k}$, and hence by the
Maximum Modulus Principle $|F|\le|F^*|$ in $\dd_{n,k}$. In
particular,
$$ \log|F(0)|\le\log|F^*(0)|. $$

The upper bound for the RHS follows from
\begin{equation*}\label{e223}
\log|F^*(0)|=
\int_{\bt}\log|F^*(\z)|\,dm=D\sum_{l=1}^N\int_{\hat\gg_l}
\dist(\z,E)^{-q}\,dm.
\end{equation*}
It is easy to estimate a typical integral in the above sum (note that
by the definition $\g_k\le\g_l$ for all $l=1,\ldots,N$)
$$
\int_{\hat\gg_l}\dist(\z,E)^{-q}\,dm\le\left\{%
\begin{array}{ll}
    \frac{2\pi^{q-1}}{q-1}\lp\frac1{\g_k}\rp^{q-1}, & \hbox{$q>1$,} \\
    \log\frac1{\g_k}, & \hbox{$q=1$.} \\
\end{array}%
\right. 
$$ Hence for $q>1$ (for $q=1$ the
argument is the same)
$$ \log|F^*(0)|\le C(q)ND\lp\frac1{\g_k}\rp^{q-1}, $$
so by \eqref{e222}
\begin{equation}\label{e224}
\sum_{j=1}^m (1-|z_j|)\le \sum_{j=1}^m
\log\frac1{|z_j|}<C(q,N)D\lp\frac1{\g_k}\rp^{q-1}.
\end{equation}

By the definition of $\g_k$ and $\oo_{n,k}$ we have
$$ 2M\g_k=2^{-n}\ge |z_j-\z_k|=\dist(z_j,E), $$
so \eqref{e224} implies
\begin{eqnarray*}
\sum_{j=1}^m(1-|z_j|)\dist(z_j,E)^{q-1+\e}&\le&
\sum_{j=1}^m(1-|z_j|)(2M\g_k)^{q-1+\e}\\
&\le& CD(2M\g_k)^{\e}=CD2^{-n\e}.
\end{eqnarray*}
Summation over $n\ge L$, and then over $k=1,\ldots,N$ gives
\begin{equation*}\label{e225}
\sum_{z_j\in\oo}(1-|z_j|)\dist(z_j,E)^{q-1+\e}\le CD, \qquad
\end{equation*}
where $\oo:=\bigcup_{n,k}\oo_{n,k}$.

The same reasoning with $\g_s=1/2^LM$, $s=1,\ldots,N$ applies to the
domain $\oo_0:=\{z\in\bd:|z-\z_k|>2^{-L}, \ k=1,\ldots,N\}$. The
proof is complete.  \hfill $\Box$

\bigskip\nt
{\it Proof of Theorem \ref{t2}.}\ Denote $\tau_n:=1-2^{-n}$,
$n=0,1\ldots$ and put $f_n(z):=f(\tau_n z)$. Then
 $$ |f_n(z)|\le \exp\Bigl(\frac{D_1}{(1-|\tau_n z|)^p\dist(\tau_n
 z,E)^q}\Bigr). $$
An elementary inequality
\begin{equation}\label{e226}
\frac{1-|z|}{1-\tau|z|}\le\frac{|z-\z|}{|\tau
z-\z|}\le\frac{1+|z|}{1+\tau|z|}<2,
\end{equation}
which holds for $z\in\bd$, $\z\in\bt$ and $0\le\tau<1$, gives
$$ \dist(\tau_n z,E)>\frac12 \dist(z,E) $$
so
$$ |f_n(z)|\le \exp\Bigl(\frac{D_2}{\dist(z,E)^q}\Bigr),
$$
where $D_2=2^{np+q}D_1$.
If $Z_f=\{z_j\}$, $Z_{f_n}=\{z_{j,n}\}$
then
$$ z_{j,n}=\frac{z_j}{\tau_n}:\ |z_j|<\tau_n. $$

By Theorem \ref{t1}
\begin{equation}\label{e227}
\sum_{z_j:|z_j|<\tau_n} \Bigl(1-\frac{|z_j|}{\tau_n}\Bigr)
\dist\Bigl(\frac{z_j}{\tau_n},E\Bigr)^r\le C(\e,q,E)2^{np+q}D_1,
\end{equation}
$r=(q-1+\e)_+$. We readily continue as
\begin{eqnarray*}
{\rm LHS \ of \ \eqref{e227}}&\ge&\sum_{\tau_{n-2}\le|z_j|<\tau_{n-1}} \Bigl(1-\frac{|z_j|}{\tau_n}\Bigr)
\dist\Bigl(\frac{z_j}{\tau_n},E\Bigr)^r\\
&\ge& \frac1{4^{r+1}}\sum_{\tau_{n-2}\le|z_j|<\tau_{n-1}}(1-|z_j|)
\dist(z_j,E)^r,
\end{eqnarray*}
and, consequently,
$$ 2^{-n(p+\e)} \sum_{\tau_{n-2}\le|z_j|<\tau_{n-1}}(1-|z_j|)
\dist(z_j,E)^r\le C2^{-n\e}D_1.
$$
Since $1-|z_j|\le 1-\tau_{n-2}$ then
$$ 2^{-n(p+\e)}\ge \frac1{4^{p+\e}}(1-|z_j|)^{p+\e}, $$
and finally
$$ \sum_{\tau_{n-2}\le|z_j|<\tau_{n-1}}(1-|z_j|)^{p+\e+1}
\dist(z_j,E)^r\le C2^{-n\e}D_1. $$ It remains only to sum up over
$n$ from $2$ to $\infty$. \hfill $\Box$

\section{Applications to complex Jacobi matrices}\label{s3}
We are interested in complex-valued Jacobi matrices of the form
\begin{equation}\label{e63}
J=J(\{a_k\},\{b_k\},\{c_k\})=
\begin{bmatrix}
b_1&c_1&0&\ldots\\
a_1&b_2&c_2&\ldots \\
0&a_2&b_3&\ldots \\
\vdots&\vdots&\vdots&\ddots
\end{bmatrix}
\end{equation}
where $a_k, b_k, c_k\in\bc$. We assume $J$ to be a compact
perturbation of  the free Jacobi matrix $J_0=J(\{1\},\{0\},\{1\})$, or,
equivalently,
$\lim_{k\to+\infty} a_k=\lim_{k\to+\infty} c_k=1$, $ \lim_{k\to+\infty} b_k=0$.
It is well-known that in this situation
$\s_{ess}(J)=[-2,2]$. The point spectrum of $J$ is
denoted by $\s_p(J)$; the eigenvalues $\lambda\in\s_p(J)$  have finite algebraic (and
geometric) multiplicity, and the set of their limit points lies on the interval
[-2,2] (see, e.g., \cite[Lemma I.5.2]{gk}).

The structure of $\s_p(J)$ and, especially, its behavior near $[-2,2]$, is an
important part of the spectral analysis of complex Jacobi matrices.  We quote a
theorem from Golinskii-Kupin \cite{gk1}  to give a flavor of the known results.
\begin{theorem}\label{t3} For $p\ge 1$,
\begin{eqnarray*}
\qquad \sum_{\lambda\in\s_p(J)} (\re\lambda-2)^p_+&+&\sum_{\lambda\in\sp} (\re\lambda+2)^p_-\\
 &\le& c_p\Bigl(\sum_{k=1}^\infty |\re b_k|^{p+1/2}+4\Bigl|\frac{a_k+\bar
c_k}2-1\Bigr|^{p+1/2}\Bigr),
\nonumber\\
\qquad\sum_{\lambda\in\s_p(J)} (\re\lambda-2)^p_+&+&\sum_{\lambda\in\sp} (\re\lambda+2)^p_-\\
 &\le& 3^{p-1}\Bigl(\sum_{k=1}^\infty |\re b_k|^p+4\Bigl|\frac{a_k+\bar
c_k}2-1\Bigr|^p\Bigr),
\nonumber
\end{eqnarray*}
where $x_+=\max\{x,0\}$, $x_-=-\min\{x,0\}$, and
\begin{equation}\label{e10}
c_p=\frac{3^{p-1/2}}2 \frac{\gg(p+1)}{\gg(p+3/2)}\frac{\gg(2)}{\gg(3/2)}.
\end{equation}
\end{theorem}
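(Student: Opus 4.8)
\medskip

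\noindent The plan is to transfer the problem to a self-adjoint Jacobi matrix and then to invoke the corresponding Lieb--Thirring bounds in the self-adjoint category; this is the route taken in \cite{gk1}, following \cite{la} in the continuous setting. I would first introduce $A:=\re J=\tfrac12(J+J^*)$. Since $J$ is tridiagonal with sub-diagonal $\{a_k\}$, super-diagonal $\{c_k\}$ and diagonal $\{b_k\}$, the matrix $A$ is again tridiagonal: its diagonal is $\{\re b_k\}$, and its sub-diagonal $\{\tfrac{a_k+\bar c_k}2\}$ is the complex conjugate of its super-diagonal, so $A=A^*$. Because $\re b_k\to0$ and $\tfrac{a_k+\bar c_k}2\to1$, the operator $A$ is a self-adjoint compact perturbation of $J_0$; hence $\s_{ess}(A)=[-2,2]$ and the eigenvalues of $A$ off $[-2,2]$ form a sequence with no accumulation points away from $\pm2$. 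After a diagonal unitary gauge one may also take the off-diagonal entries of $A$ equal to $\bigl|\tfrac{a_k+\bar c_k}2\bigr|$.

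\medskip

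\noindent The heart of the matter is the comparison inequality
$$
\sum_{\l\in\sp}(\re\l-2)^p_++\sum_{\l\in\sp}(\re\l+2)^p_-\ \le\
\sum_{\mu\in\s(A)}(\mu-2)^p_++\sum_{\mu\in\s(A)}(\mu+2)^p_-,
$$
the eigenvalues of $J$ being counted with algebraic multiplicity. For a \emph{finite} matrix $T$ this amounts to saying that the vector $(\re\l_j(T))_j$ is majorized by the spectrum of $\re T$, applied to the convex nondecreasing function $t\mapsto(t-2)^p_+$ and, after replacing $T$ by $-T$, to $t\mapsto(t+2)^p_-$ --- convexity here uses $p\ge1$. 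That majorization is immediate from the Schur triangularization $T=Q(\dd+N)Q^*$ ($Q$ unitary, $\dd=\operatorname{diag}(\l_j(T))$, $N$ strictly upper triangular): $\re T$ is then unitarily equivalent to the Hermitian matrix $\re\dd+\tfrac12(N+N^*)$, whose diagonal is exactly $(\re\l_j(T))_j$, so the Schur--Horn theorem applies. To reach the operator $J$ I would pass to the $n\times n$ finite sections $J^{(n)}$: the eigenvalues of $A^{(n)}=\re J^{(n)}$ above $2$ increase to those of $A$ by Cauchy interlacing, so the right-hand side is reached by monotone convergence, while every eigenvalue of $J$ off $[-2,2]$ is approximated, with its algebraic multiplicity, by eigenvalues of $J^{(n)}$, giving the left-hand side in the limit.

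\medskip

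\noindent It then remains to estimate $\sum_{\mu\in\s(A)}\dist(\mu,[-2,2])^p$ through the entries of $A$. The first inequality of the theorem, with exponent $p+\tfrac12$ and constant $c_p$ as in \eqref{e10}, is precisely the Hundertmark--Simon Lieb--Thirring bound for self-adjoint Jacobi matrices applied to $A$: there $|A_{k,k}|=|\re b_k|$ and $\bigl|\,|A_{k,k+1}|-1\bigr|\le\bigl|\tfrac{a_k+\bar c_k}2-1\bigr|$, and the factor $4$ and the $\gg$-quotients in \eqref{e10} are exactly the normalizing constant of that inequality. The second, coarser, inequality I would get directly: as $\|J_0\|=2$, the min--max principle yields $\mu-2\le\mu^+_j(A-J_0)$ for the $j$-th eigenvalue of $A$ exceeding $2$ (and symmetrically for eigenvalues below $-2$), whence $\sum_{\mu\in\s(A)}\dist(\mu,[-2,2])^p\le\|A-J_0\|_{\css_p}^p$; writing $A-J_0$ as the sum of $\operatorname{diag}(\re b_k)$ and the two weighted shifts carrying $\tfrac{a_k+\bar c_k}2-1$ and its conjugate, using $\|X+Y+Z\|_{\css_p}^p\le 3^{p-1}\bigl(\|X\|_{\css_p}^p+\|Y\|_{\css_p}^p+\|Z\|_{\css_p}^p\bigr)$ and the fact that the $\css_p$-norm of a diagonal or weighted-shift operator is the $\ell^p$-norm of its entries, one lands on the stated bound.

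\medskip

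\noindent The main obstacle is the limit in the comparison inequality. The finite-dimensional majorization is elementary, but for the \emph{non}-self-adjoint $J$ one must rule out ``spurious'' eigenvalues of the sections $J^{(n)}$ escaping into $\{|\re z|>2\}$ and check that the genuine eigenvalues are recovered with the correct algebraic multiplicities; this is exactly where the structure of $\sp$ recalled in Section~\ref{s3} (finite multiplicities, accumulation only on $[-2,2]$) together with the compactness of $J-J_0$ are used. An alternative avoiding truncation would be a homotopy argument along $A+is\,\im J$, $s\in[0,1]$, monitoring the crossings of the eigenvalues through the vertical lines $\re z=\pm a$.
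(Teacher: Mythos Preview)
The paper does not prove this theorem at all: it is \emph{quoted} from Golinskii--Kupin \cite{gk1} ``to give a flavor of the known results'', and no argument is supplied in the present text. So there is no proof here to compare against, only the reference.

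That said, your sketch follows precisely the strategy the authors describe in the introduction as the ``traditional approach'' of \cite{gk1} (in the spirit of \cite{la}): pass to the self-adjoint part $A=\re J$, compare $\sum(\re\l-2)^p_+$ for $J$ with $\sum(\mu-2)^p_+$ for $A$ via a majorization argument, and then invoke the self-adjoint Lieb--Thirring bounds (Hundertmark--Simon for the first inequality, a direct $\css_p$ estimate for the second). The finite-dimensional step through Schur triangularization and Schur--Horn is correct, and the constants you obtain match those in the statement. One small remark on the limiting step: ``spurious'' eigenvalues of $J^{(n)}$ in $\{|\re z|>2\}$ are actually harmless, since they can only \emph{increase} the left-hand side of the finite-section inequality, which is already dominated by the $A^{(n)}$-side; what genuinely needs care is the convergence of the $A^{(n)}$-sums to the $A$-sum and the recovery of each eigenvalue of $J$ (with its algebraic multiplicity) by eigenvalues of the sections. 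Both are handled by compactness of $J-J_0$ and standard perturbation theory, as you indicate.
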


To formulate the results of this section, we use the Schatten-von
Neumann classes of compact operators $\css_p$, $p\ge1$, and
regularized determinants $\det_p (I+A)$, $A\in\css_p$. An extensive
information on the subject is in Gohberg-Krein \cite[Ch. 3, 4]{gk},
Simon \cite[Ch. 2, 9]{si1}. The norms in $\css_p$, $1\le p<\infty$,
$\css_\infty$, are denoted by $\|\cdot\|_p$, $\|\cdot\|$,
respectively.

Let $\lambda\in\hat\bc\bsl[-2,2]$. We map this domain onto $\bd$ in the standard way
$\lambda=z+1/z,\ z=\frac 12(\lambda-\sqrt{\lambda^2-4}), z\in\bd$.  Here is a list of
elementary properties of the above-mentioned concepts and their connections to
the spectral characteristics of operator $J$. Suppose $J-J_0\in\css_p, 1\le
p<\infty$.
\begin{itemize}
\item[--] $(J-J_0)(J_0-\lambda)^{-1}\in \css_p.$
\item[--] for an integer $p\ge1$,
$$
u_p(\lambda)=\det{}_p(J-\lambda)(J_0-\lambda)^{-1}=\det{}_p(I +(J-J_0)(J_0-\lambda)^{-1}),
$$
and so the regularized perturbation determinant $u_p$ is well-defined.
\item[--] the function $u_p$ is analytic on $\hat\bc\bsl[-2,2]$. Furthermore,
$Z_{u_p}=\s_p(J)$ taking into account the multiplicities, i.e., the order of a
zero $\lambda_0$ of the function $u_p$ is equal to the total multiplicity of
$\lambda_0\in\s_p(J)$.
\item[--] We also have
\begin{eqnarray*}
|\det{}_p(J-\lambda)(J_0-\lambda)^{-1}|&\le&\exp\Bigl(\frac 1p
\|(J-J_0)(J_0-\lambda)^{-1}\|^p_p\Bigr)\\
&\le&\exp\Bigl(\frac 1p \|(J-J_0)\|^p_p\, \|(J_0-\lambda)^{-1}\|^p\Bigr)\\
&=&\exp\Bigl(\frac 1p \|(J-J_0)\|^p_p\, \mathrm{dist}\, (\lambda,
[-2,2])^{-p}\Bigr).
\end{eqnarray*}
The first inequality is in \cite[Ch.9]{si1}. Then we recall that
$\css_p$ is an ideal with respect to the multiplication and use the
expression for the resolvent $\|(J_0-\lambda)^{-1}\|$ of a {\it
self-adjoint} operator $J_0$.
\end{itemize}

\begin{lemma}\label{l3} Let $\lambda=z+1/z$, $z\in\bd\bsl \bd_{\d}$, where $0<\d<1$.
Then
\begin{eqnarray*}
\mathrm{dist}\, (\lambda, [-2,2])&\asymp& (1-|z|)|1-z^2|,\\
|1\pm z|^2&\asymp& |\lambda\pm 2|,\quad
1-|z|\asymp \frac{\mathrm{dist}\, (\lambda, [-2,2])}{|\lambda^2-4|^{1/2}}.
\end{eqnarray*}
\end{lemma}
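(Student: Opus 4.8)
The plan is to establish the three asymptotic equivalences in Lemma \ref{l3} by direct computation using the conformal map $\l = z + 1/z$, exploiting the fact that $z$ stays in the annulus $\bd\bsl\bd_\d$, so all constants may depend on $\d$. First I would compute $\l^2 - 4 = (z+1/z)^2 - 4 = (z - 1/z)^2 = \frac{(z^2-1)^2}{z^2}$, which immediately gives $|\l^2 - 4| = \frac{|1-z^2|^2}{|z|^2}$; since $\d \le |z| < 1$, this means $|\l^2-4|^{1/2} \asymp |1-z^2|$ with constants depending on $\d$. Similarly $\l \pm 2 = z + 1/z \pm 2 = \frac{z^2 \pm 2z + 1}{z} = \frac{(z\pm 1)^2}{z}$, so $|\l \pm 2| = \frac{|1\pm z|^2}{|z|} \asymp |1\pm z|^2$, which is the second claim.

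Next I would handle $\dist(\l,[-2,2])$. The cleanest route is to note that for $\l \notin [-2,2]$ the distance to the segment is comparable to $|\im\l|$ when $\l$ is ``above the middle'' of the segment and comparable to $|\l \mp 2|$ near the endpoints; more robustly, one has the elementary two-sided bound $\dist(\l,[-2,2]) \asymp \frac{|\im\l|\,(|\l-2| + |\l+2|) + |\im(\l^2-4)|}{\text{something}}$ — rather than chase cases, I would instead use the known formula that under the Joukowski map the image of the circle $|z| = r$ is an ellipse with semi-axes $r + 1/r$ and $1/r - r$, and $\dist$ from a point on that ellipse to $[-2,2]$ is comparable (uniformly for $r \in [\d,1)$) to the minor semi-axis $1/r - r = \frac{1-r^2}{r} \asymp 1 - r$ times a factor accounting for proximity to the foci. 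Writing $z = re^{i\th}$, a direct computation gives $\im\l = (r - 1/r)\sin\th = -\frac{1-r^2}{r}\sin\th$ and one checks $\dist(\l,[-2,2]) \asymp (1-r)\,|\sin\th|$ when $\l$ is not near $\pm 2$, while $|1 - z^2| = |1 - r^2 e^{2i\th}|$ behaves like $|\sin\th|$ up to the $(1-r)$-scale near $\th = 0,\pi$; combining these yields $\dist(\l,[-2,2]) \asymp (1-|z|)|1-z^2|$ after reconciling the endpoint regime, where both sides degenerate like $(1-r)^2$.

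The third relation then follows by combining the first two: $\frac{\dist(\l,[-2,2])}{|\l^2-4|^{1/2}} \asymp \frac{(1-|z|)|1-z^2|}{|1-z^2|} = 1-|z|$.

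I expect the main obstacle to be the bookkeeping in the estimate for $\dist(\l,[-2,2])$: near the endpoints $\pm 2$ (i.e.\ $\th$ near $0$ or $\pi$) the naive comparison with $|\im\l|$ breaks down, since the nearest point of $[-2,2]$ is an endpoint and the distance picks up a contribution from $\re\l - 2$ or $\re\l + 2$. The honest way to handle this is to split into the region where, say, $|1 - z| \le 1/10$ (treat by the $|\l-2|\asymp|1-z|^2$ formula, noting $\dist(\l,[-2,2])\asymp|\l-2|$ there since $\l$ is near the endpoint $2$ and $1-|z|\asymp|1-z|^2/\ldots$), the symmetric region near $-2$, and the complementary region where $|1 \pm z|$ are both bounded below, where $|1-z^2|\asymp 1-|z|$ is not true but $\dist\asymp|\im\l|\asymp(1-|z|)|\sin\th|$ and $|1-z^2|\asymp|\sin\th|+(1-|z|)\asymp|\sin\th|$ can fail when $\sin\th$ is tiny — so in fact the cleanest organization is the two endpoint neighborhoods plus a ``bulk'' region, verifying the claimed $\asymp$ in each; all constants are harmless since $\d \le |z| < 1$ confines us to a compact-modulus annulus.
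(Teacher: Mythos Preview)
The paper does not actually prove this lemma: it simply cites \cite[p.~9, Corollary~1.4]{pom} for the first relation $\dist(\lambda,[-2,2])\asymp(1-|z|)|1-z^2|$ and leaves the remaining two (which, as you correctly observe, are an immediate algebraic identity and a quotient of the first two) without further comment. So your direct-computation approach is not a comparison against a proof in the paper but rather an attempt to supply one.

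Your derivations of $|\lambda\pm2|\asymp|1\pm z|^2$ and of the third relation from the first two are clean and correct. The problematic part is your treatment of the first relation near the endpoints. You write that in the region $|1-z|\le 1/10$ one has $\dist(\lambda,[-2,2])\asymp|\lambda-2|$; this is false. Take $z=re^{i\theta}$ with $1-r=10^{-6}$ and $\theta=10^{-2}$: then $|1-z|\approx 10^{-2}$, while $|\lambda-2|\asymp|1-z|^2\approx 10^{-4}$, yet $\re\lambda<2$ here so $\dist(\lambda,[-2,2])=|\im\lambda|\asymp(1-r)|\theta|\approx 10^{-8}$. The ratio is unbounded. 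The point is that the endpoint neighborhood $|1-z|\le 1/10$ itself must be split according to whether $|\theta|\lesssim 1-r$ (where indeed $\dist\asymp|\lambda-2|\asymp(1-r)^2$) or $|\theta|\gtrsim 1-r$ (where $\dist=|\im\lambda|\asymp(1-r)|\theta|$ and $|1-z^2|\asymp|\theta|$); in both sub-cases one then verifies $\dist\asymp(1-r)|1-z^2|$. Your final sentence gestures at exactly this refinement, so the plan is salvageable, but as written the endpoint step is wrong and the fragment ``$1-|z|\asymp|1-z|^2/\ldots$'' is not a meaningful statement. Either carry out the two-scale case split carefully, or---as the paper does---invoke the general estimate from Pommerenke on the boundary distortion of conformal maps.
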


For the proof of the first relation see \cite[p.9, Corollary 1.4]{pom}.

Let $f_p(z)=u_p(\lambda(z))$, $p>1$. Then
\begin{equation*}
|f_p(z)|\le\exp\Bigl(\frac{\|J-J_0\|^p_p}{p(1-|z|)^p|1-z^2|^p}\Bigr).
\end{equation*}
For $p=1$, it is proved in \cite{eg2} that
\begin{equation*}
|f_1(z)|\le
\frac{2\|J-J_0\|_1}{|1-z^2|}\exp\Bigl(\frac{2\|J-J_0\|_1}{|1-z^2|}\Bigr).
\end{equation*}

\begin{theorem}\label{t4}
For $p=1$ and every $\ep>0$ we have
\begin{equation}\label{e8}
\sum_{\lambda\in\s_p(J)} \frac{\mathrm{dist}\, (\lambda,
[-2,2])}{|\lambda^2-4|^{(1-\ep)/2}}\le C(\ep, \|J-J_0\|,
p)\|J-J_0\|_1.
\end{equation}

For integer $p\ge 2$ and every $\ep>0$ we have
\begin{equation}\label{e81}
\sum_{\lambda\in\s_p(J)} \frac{\mathrm{dist}\, (\lambda,
[-2,2])^{p+1+\ep}} {|\lambda^2-4|}\le C(\ep, \|J-J_0\|, p)
\|J-J_0\|^p_p.
\end{equation}
\end{theorem}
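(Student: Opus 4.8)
The plan is to deduce Theorem~\ref{t4} from Theorems~\ref{t1} and~\ref{t2} applied to the functions $f_p(z)=u_p(\l(z))$ on the unit disk, with the boundary singularity set $E=\{+1,-1\}$, which corresponds under the conformal map $\l=z+1/z$ to the endpoints $\pm2$ of the essential spectrum. First I would record the two growth bounds already displayed in the excerpt: for $p=1$, using $|g(z)|\le\exp(g(z))$ for the prefactor when convenient, the estimate $|f_1(z)|\le C\exp(2\|J-J_0\|_1/|1-z^2|)$; and for integer $p\ge2$, the estimate $|f_p(z)|\le\exp(\|J-J_0\|_p^p/(p(1-|z|)^p|1-z^2|^p))$. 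Since $|1-z^2|=|1-z|\,|1+z|$ and $\dist(z,E)=\min(|1-z|,|1+z|)$, on the region where $z$ is near one endpoint, say near $+1$, we have $|1+z|\asymp2$, hence $|1-z^2|\asymp\dist(z,E)$; away from both endpoints $|1-z^2|$ is bounded below, so a crude bound suffices there. Thus $1/|1-z^2|\le C/\dist(z,E)$ globally (after possibly shrinking to $z$ bounded away from $0$, which is harmless since $|u_p|$ is controlled on any fixed annulus around $0$ and absorbing the constant; one can also simply note $|f_p(0)|=|u_p(\infty)|=1$ after normalization, or rescale).

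Second, I would match the hypotheses. For $p=1$: after normalizing so $|f_1(0)|=1$ (dividing by the value at the origin, which only changes constants), the bound becomes $|f_1(z)|\le\exp(D/\dist(z,E))$ with $D\asymp\|J-J_0\|_1$, so Theorem~\ref{t1} applies with $q=1$ and yields $\sum_{z\in Z_{f_1}}(1-|z|)\dist(z,E)^{\ep}\le C D$ for every $\ep>0$ (here $(q-1+\ep)_+=\ep$). For integer $p\ge2$: the bound is $|f_p(z)|\le\exp(D_1/((1-|z|)^p\dist(z,E)^p))$ with $D_1\asymp\|J-J_0\|_p^p$, so Theorem~\ref{t2} applies with the role of ``$q$'' played by $p$, giving $\sum_{z\in Z_{f_p}}(1-|z|)^{p+1+\ep}\dist(z,E)^{(p-1+\ep)_+}\le C D_1$, and since $p\ge2$ we have $(p-1+\ep)_+=p-1+\ep$.

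Third, I would translate the disk-side sums back to the $\l$-plane using Lemma~\ref{l3}, recalling $Z_{u_p}=\s_p(J)$ with multiplicity and $Z_{f_p}=\{z:\l(z)\in\s_p(J)\}$. From Lemma~\ref{l3}: $1-|z|\asymp\dist(\l,[-2,2])/|\l^2-4|^{1/2}$, $|1\pm z|^2\asymp|\l\pm2|$, hence $\dist(z,E)^2=\min(|1-z|^2,|1+z|^2)\asymp\min(|\l-2|,|\l+2|)$, and also $|\l^2-4|=|\l-2||\l+2|$. For the $p=1$ case: $(1-|z|)\dist(z,E)^\ep\asymp\dfrac{\dist(\l,[-2,2])}{|\l^2-4|^{1/2}}\cdot\min(|\l-2|,|\l+2|)^{\ep/2}$. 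One then checks that $\min(|\l-2|,|\l+2|)^{\ep/2}/|\l^2-4|^{1/2}\asymp|\l^2-4|^{-(1-\ep/2)/2}$ up to a bounded factor (since the smaller of $|\l\pm2|$ controls $|\l^2-4|^{1/2}$ up to the bounded larger factor, and $\l$ stays near $[-2,2]$), which after relabelling $\ep$ gives exactly the exponent $(1-\ep)/2$ in \eqref{e8}. For integer $p\ge2$: $(1-|z|)^{p+1+\ep}\dist(z,E)^{p-1+\ep}\asymp\dfrac{\dist(\l,[-2,2])^{p+1+\ep}}{|\l^2-4|^{(p+1+\ep)/2}}\cdot\min(|\l-2|,|\l+2|)^{(p-1+\ep)/2}$, and again $\min(|\l\pm2|)^{(p-1+\ep)/2}/|\l^2-4|^{(p+1+\ep)/2}\asymp|\l^2-4|^{-1}\cdot(\text{bounded})$ after absorbing a bounded power of the larger factor and relabelling $\ep$, producing the left side of \eqref{e81}. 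Since the right-hand constants in Theorems~\ref{t1}, \ref{t2} depend on $\ep$, $p$, $E$ and the implied comparison constants depend on the fixed annulus (equivalently on a lower bound for $\dist(\l,[-2,2])$, which we get for free because the eigenvalues accumulate only on $[-2,2]$ so all but finitely many are near it, and the finitely many exceptional ones contribute a bounded amount absorbed into the constant), all constants are of the claimed form $C(\ep,\|J-J_0\|,p)$.

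The main obstacle is the bookkeeping around the conformal transplant: one must handle the factor $1/|1-z^2|$ that does \emph{not} collapse to a pure power of $\dist(z,E)$ globally on $\bd$ (it only does so locally near each endpoint), and one must carefully compare $\min(|\l-2|,|\l+2|)$ with $|\l^2-4|^{1/2}$ near each endpoint — these differ by the bounded-but-not-constant factor $\max(|\l-2|,|\l+2|)^{1/2}$, which is why the $\ep$'s in the final statement are inevitable and cannot be removed. A secondary, more routine point is the normalization $|f_p(0)|=1$ versus the given growth bound at a generic point: the determinants satisfy $u_p(\infty)=1$, i.e.\ $f_p(z)\to1$ as $z\to0$, so no genuine renormalization is needed, but one should note this explicitly so that Theorems~\ref{t1} and~\ref{t2} apply verbatim. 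Everything else is the elementary asymptotic algebra of Lemma~\ref{l3} together with the inequality \eqref{e226}-type comparisons, which I would not spell out in full.
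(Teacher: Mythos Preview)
Your proposal is correct and follows essentially the same route as the paper: apply Theorem~\ref{t1} (for $p=1$) and Theorem~\ref{t2} (for integer $p\ge 2$) to $f_p$ with $E=\{+1,-1\}$, then translate the resulting zero sums back to the $\lambda$-plane via Lemma~\ref{l3}. The only cosmetic difference is that the paper phrases the disk-side sums in terms of $|1-z^2|$ rather than $\dist(z,E)$ (the two are comparable on $\bd$ since $\max(|1-z|,|1+z|)\in[1,2]$), and note that the paper's printed proof actually swaps the labels of Theorems~\ref{t1} and~\ref{t2}, which you have the right way around.
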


\begin{proof} Indeed, $\sigma_p(J)$ is in $B(0, 2+\|J-J_0\|)$, so $Z_{f_p}$ lies in
$\bd\bsl\bd_{\d}$ with $0<\d<1$ depending on $\|J-J_0\|$.

For instance, to get the second relation we apply Theorem \ref{t1} to $f_p$:
\begin{multline*}
\sum_{\lambda\in\s_p(J)}(1-|z(\lambda)|)^2 \mathrm{dist}\, (\lambda, [-2,2])^{p-1+\ep}\\
=\sum_{z\in Z_{f_p}} (1-|z|)^{p+1+\ep}|1-z^2|^{p-1+\ep}\le
C(\ep,\|J-J_0\|, p) \|J-J_0\|^p_p.
\end{multline*}
It remains to use Lemma \ref{l3}. The first relation follows in a similar way
from Theorem \ref{t2}.
\end{proof}

It is worth mentioning that the only ingredient we need to make the proof of
Theorem \ref{t4} work, is the bound on the resolvent of the unperturbed operator.
Neither its self-adjoint property, nor three-diagonal form are required. For
instance, we can prove the same result for an operator, similar to $J_0$ and its perturbations.

Although we do not claim the results being optimal, the bounds give
a lot of interesting information. As compared to classical
Lieb-Thirring inequalities for complex Jacobi matrices, the bounds
\eqref{e8} and \eqref{e81} involve the whole point spectrum
$\s_p(J)$, and not of its relatively simple parts (see \cite{gk1}).
In particular, we see that $\s_p(J)$ behaves differently along the
interval $(-2,2)$ and in the neighborhoods of its endpoints $\pm2$.

An analogous theorem holds, of course, for multidimensional Jacobi
matrices, see \cite[Sect. 2]{gk1} for definitions.


\begin{thebibliography}{99}

\bibitem{dek}
M. Demuth, G. Katriel, {\it Eigenvalue inequalities in terms of Schatten norm
bounds on differences of semigroups, and application to Schr\"odinger
operators}, to appear.

\bibitem{eg2}
I. Egorova, L. Golinskii, {\it On the location of the discrete spectrum for
complex Jacobi
matrices.}  Proc. Amer. Math. Soc.  {\bf 133}  (2005),  no. 12, 3635--3641.

\bibitem{la}
R. Frank, A. Laptev, E. Lieb, R. Seiringer,
{\it Lieb-Thirring inequalities for Schr\"odinger operators with complex-valued
potentials.}
Lett. Math. Phys. {\bf 77} (2006), no. 3, 309--316.

\bibitem{ga}
J. Garnett,  {\it  Bounded analytic functions. } Graduate Texts in Mathematics,
vol. 236. Springer, New York, 2007.

\bibitem{gk}
I. Gohberg, M. Krein, {\it Introduction to the theory of linear nonselfadjoint
operators.} Translations of AMS, vol. 18,  AMS, Providence, 1969.

\bibitem{gk1}
L. Golinskii, S. Kupin, {\it Lieb-Thirring bounds for complex Jacobi matrices},
to appear in Lett. Math. Phys.

\bibitem{he1}
H.  Hedenmalm, B. Korenblum, K. Zhu, {\it Theory of Bergman spaces.} Graduate
Texts in Mathematics, vol. 199. Springer-Verlag, New York, 2000.


\bibitem{li1}
E. Lieb, W. Thirring, {\it Bound for the kinetic energy of fermions which proves
the stability of matter}, Phys. Rev. Lett. {\bf 35} (1975), 687--689. Errata
{\bf 35} (1975), 1116.

\bibitem{li2}
E. Lieb, W. Thirring, {\it Inequalities for the moments of the eigenvalues of
the Schr\"odinger Hamiltonian and their relation to Sobolev inequalities,} in:
Studies in Mathematical Physics. Essays in Honor of Valentine Bargmann, pp.
269--303, Princeton University Press, Princeton, 1976.

\bibitem{pom}
Ch. Pommerenke, {\it Boundary behaviour of conformal maps.} Springer, 1992.

\bibitem{si1} B. Simon, {\it  Trace ideals and their applications.} Mathematical
Surveys and Monographs, vol. 120. AMS, Providence, RI, 2005.


\end{thebibliography}
\end{document}